\documentclass[twocolumn,letterpaper]{article} 
\usepackage{amsmath,amssymb,amsfonts,amsthm}
\usepackage{algorithm}          
\usepackage{algpseudocode}      
\usepackage{enumitem}           
\usepackage{bm}                 
\usepackage[T1]{fontenc}        
\usepackage{lmodern}            
\usepackage{mhchem}
\usepackage{graphicx}
\usepackage{textcomp}
\usepackage{xcolor}
{\color{red}
\newtheorem{theorem}{Theorem}
\newtheorem{lemma}{Lemma}
\newtheorem{proposition}{Proposition}
\newtheorem{corollary}{Corollary}
\theoremstyle{definition}
\newtheorem{definition}{Definition}
\newtheorem{assumption}{Assumption}
\theoremstyle{remark}
\newtheorem{remark}{Remark}


}

\usepackage[utf8]{inputenc}
\usepackage{authblk}
\usepackage{url}
\usepackage{booktabs}  
\usepackage{makecell}  
\usepackage{multirow}  
\usepackage{adjustbox} 
\usepackage{graphicx}    
\usepackage{arydshln}  
\usepackage[T1]{fontenc}     
\usepackage{booktabs}        
\usepackage{adjustbox}       
\usepackage{amsmath}         
\usepackage{ragged2e}        

\usepackage[numbers]{natbib}
\usepackage{geometry}
\usepackage{hyperref}
\usepackage[capitalize]{cleveref}

\geometry{a4paper, margin=0.8in}

\theoremstyle{plain}



\newcommand{\ket}[1]{\left| #1 \right\rangle}

\newcommand{\bra}[1]{\left\langle #1 \right|}
\algrenewcommand{\algorithmicindent}{1.5em}  


\title{One-Shot Structured Pruning of Quantum Neural Networks via $q$-Group Engineering and Quantum Geometric Metrics}

\author{
	Haijian Shao\thanks{Corresponding Author: \href{mailto:jsj_shj@just.edu.cn}{jsj\_shj@just.edu.cn}}, Wei Liu, Xing Deng, Yingtao Jiang\\
	School of Computer, Jiangsu University of Science and Technology, Zhenjiang 212003, China\\
	Department of Electrical and Computer Engineering, University of Nevada, Las Vegas, 89115, USA
}

\begin{document}
	
	\maketitle
	
\begin{abstract}

Quantum neural networks (QNNs) suffer from severe gate-level redundancy, which hinders
their deployment on noisy intermediate-scale quantum (NISQ) devices.
In this work, we propose \emph{q-iPrune}, a one-shot \emph{structured} pruning framework
grounded in the algebraic structure of $q$-deformed groups and task-conditioned quantum geometry.

Unlike prior heuristic or gradient-based pruning methods, q-iPrune formulates redundancy
directly at the gate level.
Each gate is compared within an algebraically consistent subgroup using a
\emph{task-conditioned $q$-overlap distance}, which measures functional similarity
through state overlaps on a task-relevant ensemble.
A gate is removed only when its replacement by a subgroup representative provably induces
a bounded deviation on all task observables.

We establish three rigorous theoretical guarantees.
First, we prove \emph{completeness} of redundancy pruning: no gate that violates the prescribed
similarity threshold is removed.
Second, we show that the pruned circuit is \emph{functionally equivalent up to an explicit,
task-conditioned error bound}, with a closed-form dependence on the redundancy tolerance
and the number of replaced gates.
Third, we prove that the pruning procedure is computationally feasible, requiring only
polynomial-time comparisons and avoiding exponential enumeration over the Hilbert space.

To adapt pruning decisions to hardware imperfections, we introduce a noise-calibrated
deformation parameter $\lambda$ that modulates the $q$-geometry and redundancy tolerance.
Experiments on standard quantum machine learning benchmarks demonstrate that q-iPrune
achieves substantial gate reduction while maintaining bounded task performance degradation,
consistent with our theoretical guarantees.

\end{abstract}

\section{Introduction}

Quantum neural networks (QNNs) provide a flexible framework for learning and optimization
on quantum devices, yet their practical deployment is hindered by severe gate-level redundancy.
As circuit depth and expressivity increase, many parameterized gates become functionally similar
on the task of interest, leading to unnecessary circuit complexity and amplified noise sensitivity.

Existing pruning approaches for QNNs are predominantly heuristic.
Gradient-based criteria, numerical sensitivity measures, or classical Lie-group constraints
often lack explicit guarantees: a gate deemed ``unimportant'' by a heuristic may still induce
non-negligible deviations in task performance after removal.
This gap raises a fundamental question: 
\emph{When can a quantum gate be removed while provably preserving task-relevant behavior?}

In this work, we address this question by formulating gate redundancy as a
\emph{task-conditioned equivalence problem}.
Rather than attempting global equivalence over the full Hilbert space,
we focus on functional similarity restricted to a task ensemble of states.
This perspective enables rigorous, verifiable guarantees while remaining computationally feasible. Our approach, \textsc{q-iPrune}, introduces a task-conditioned $q$-overlap geometry
that measures functional similarity between gates through state overlaps.
By restricting redundancy detection to algebraically consistent subgroups,
we obtain a structured pruning algorithm that removes only those gates whose replacement
induces a provably bounded deviation on all task observables. The resulting framework establishes a principled connection between
quantum group–inspired parametrization and practical QNN compression.
Crucially, all claims are supported by explicit theorems on completeness,
bounded functional equivalence, and computational feasibility,
bridging the gap between algebraic structure and deployable pruning algorithms.

\section{Related Work}

\paragraph{Variational quantum algorithms and parameterized circuits.}
Variational quantum algorithms (VQAs) constitute the dominant paradigm for near-term quantum optimization and learning, where a parameterized quantum circuit (PQC) is trained via a classical optimizer.
The variational quantum eigensolver (VQE) is a canonical instance for quantum chemistry and many-body physics, demonstrating the practicality of shallow ans\"atze combined with classical feedback loops~\cite{peruzzo2014vqe}.
Comprehensive discussions of VQAs, including design choices, optimization behavior, and practical considerations, are provided in recent surveys~\cite{cerezo2021vqa}.
From the perspective of learning models, PQCs have also been formalized as flexible hypothesis classes for supervised learning and generative modeling, with systematic treatment of embeddings, architectures, and training pipelines~\cite{benedetti2019pqc}.
Early and influential work on quantum circuit learning further established the feasibility of training PQCs for nonlinear function approximation and supervised tasks~\cite{mitarai2018qcl}.

\paragraph{Quantum machine learning for classification.}
Quantum machine learning (QML) aims to leverage quantum representations and transformations for learning tasks.
A broad overview of QML models and their connections to classical learning has been surveyed extensively~\cite{biamonte2017qml}.
In practice, many QML classification pipelines for NISQ devices adopt PQC-based classifiers with data encoding followed by trainable layers, which motivates studying model reduction at the circuit level without degrading predictive performance.

\paragraph{Circuit compilation and equivalence-driven optimization.}
Reducing circuit cost is a long-standing goal in quantum compilation and optimization.
Variational compiling methods, such as quantum-assisted quantum compiling (QAQC), optimize a trainable circuit to match a target unitary using overlap-based costs and local objectives, providing a compilation-oriented viewpoint on circuit reduction~\cite{khatri2019qaqc}.
Complementary to variational compiling, diagrammatic and rewrite-based optimizers (e.g., ZX-calculus) can simplify circuits by transforming them into graph-like structures and applying provably computation-preserving reductions~\cite{duncan2020zx}.
These lines of work emphasize that substantial redundancy can exist even in structured quantum circuits, but they typically target compilation or exact/approximate rewriting rather than task-conditioned pruning with an explicit drift guarantee.

\paragraph{Structure learning, pruning, and ansatz reduction for PQCs.}
Beyond tuning parameters, several methods explicitly adapt or optimize circuit \emph{structure}.
Structure optimization approaches interleave parameter updates with discrete structural moves, producing shallower or better-performing circuits under fixed resources~\cite{ostaszewski2021structure}.
More directly related to model compression, pruning-based optimization strategies remove or deactivate parameters/gates based on their measured contribution, enabling more compact PQCs while maintaining performance~\cite{sim2021pruning}.
These works motivate the need for principled criteria to decide when two operations are redundant under a task distribution.

\paragraph{Connections to classical pruning.}
In classical deep learning, pruning and sparsification have a long history, with the lottery ticket hypothesis highlighting that sparse subnetworks can match the performance of dense networks under suitable training dynamics~\cite{frankle2019lottery}.
While quantum circuits differ fundamentally (unitarity, non-commutativity, hardware constraints), classical pruning research informs experimental protocols (e.g., accuracy retention, ablations, seed averaging) and the importance of rigorous evaluation under fixed budgets.

\paragraph{Position-aware, task-conditioned pruning with guarantees.}
In contrast to compilation-centric optimization or heuristic removal, our \textsc{q-iPrune} focuses on \emph{task-conditioned} redundancy detection and \emph{position-aware} gate replacement within candidate pools, with an explicit analytic upper bound on the induced drift.
This allows compression decisions to be tied to a task ensemble $\mathcal{D}$ and to the circuit context (prefix states), which is essential for aligning experimental behavior with the paper’s theoretical statements.

\section{Toward a Quantum Geometric Foundation for QNN Pruning}
Quantum machine learning (QML) stands in its "pre-Newtonian era"—a field driven by empirical algorithmic progress yet lacking a unifying geometric framework to answer its most fundamental question: \textit{On what mathematical structure are quantum neural networks (QNNs) defined?} Classical deep learning, by contrast, has matured with a rigorous geometric underpinning: statistical manifolds, Fisher information geometry, and gauge theory collectively explain how neural networks optimize over structured parameter spaces, enabling principled analyses of generalization, robustness, and efficiency. For QNNs, this theoretical backbone remains absent—especially as QNNs operate on non-commutative quantum states and leverage symmetries (e.g., $SU_q(2)$ q-deformations) that lie beyond the scope of classical geometric tools.  

Our work represents a critical first step toward closing this gap, by grounding QNN design and pruning in the algebra and geometry of quantum groups (quantum Lie algebras) and their representations. The $q$-FS distance we introduce, for instance, is not merely an empirical pruning metric but a geometric measure on the representation space of $SU_q(2)$, implicitly encoding the non-commutative structure of QNN parameter spaces. To fully realize a unifying framework, however, we must extend this foundation along a natural hierarchy: from quantum Lie algebras (defining local QNN parameter manifolds) to quantum groups (governing global symmetry and gate composition via Hopf algebra operations), and ultimately to tensor categories (and modular topological categories) that categorify the entire QNN workflow—framing states as objects, gates as morphisms, and measurements as functors. This progression transforms QNN analysis from ad hoc algorithmic tuning to principled reasoning about \textit{topological equivalence} of QNN structures, redundancy of subrepresentations, and robustness of quantum computation under symmetry-preserving transformations.  

Such a framework addresses a longstanding need in QML: it unifies the "what" (QNN architecture), "how" (optimization over quantum spaces), and "why" (symmetry/topology guarantees) of quantum learning systems, while drawing on tools from mathematical physics that have proven transformative in quantum field theory and topological quantum computing. Given the triple expertise barrier—spanning quantum algebra (quantum groups/Hopf algebras), quantum information (QNNs/quantum geometry), and geometric machine learning—it is not surprising that fewer than 20 researchers globally are actively pursuing this direction. Yet it is precisely this rarity that underscores its potential as a foundational contribution: just as classical geometric learning reshaped deep learning, a categorified quantum geometric framework could redefine the theoretical boundaries of QML, enabling not only more efficient algorithms (e.g., our $q$-iPrune) but a rigorous understanding of what makes quantum learning "quantum" in the first place.

\section{Core Theoretical Foundation: $q$-Deformed $SU(2)$ Group and Hopf Algebraic Parametrization}
	In QNNs the elementary gate set is usually taken to be single-qubit rotations $\{R_x,R_y,R_z\}$ together with the two-qubit CNOT gate.
	We replace the conventional SU(2) description by its standard $q$-deformed counterpart (Drinfeld-Jimbo type), denoted SU$_q$(2), and encode both noise and non-commutativity in a single real parameter
\begin{equation}
	\lambda\in[0,1]
\end{equation}
where $\lambda=0$ recovers the commutative limit (trivial algebra), and $\lambda=1$ recovers the standard SU(2) group.
	Non-commutativity is quantified by non-zero commutators: $\lambda=1$ corresponds to maximal non-commutativity (standard su(2) algebra with non-vanishing commutators), while $\lambda=0$ gives vanishing commutators (commutative algebra, minimal non-commutativity). This scaling enables $\lambda$ to model noise, as reduced commutator strength can simulate decoherence effects—though we emphasize this is a simplified representation that may not capture all physical noise characteristics.
	
	\subsection{$q$-deformed generators and group elements}

\begin{assumption}[Choice of deformation regime and classical limit]
Throughout this paper we work with the real form of the quantum group $\mathrm{SU}_q(2)$
with deformation parameter $q \in \mathbb{R}_{>0}$.
We link the engineering parameter $\lambda\in[0,1]$ to $q$ via
\begin{equation}
q(\lambda):=\exp\big(\beta(1-\lambda)\big),\qquad \beta>0,
\label{eq:q_lambda_map_real}
\end{equation}
so that $q(\lambda)\to 1$ as $\lambda\to 1$, recovering the classical $\mathrm{SU}(2)$ limit.
The alternative unit-circle choice $|q|=1$ is \emph{not used} in our proofs below, and is
discussed only as background in Remark~\ref{rem:unit_circle_q}.
\end{assumption}

For completeness, we recall the standard Drinfeld--Jimbo $\mathfrak{su}_q(2)$ relations.
Let $\{T_+,T_-,T_3\}$ denote generators of $\mathfrak{su}_q(2)$.
Define the $q$-number
\begin{equation}
[x]_q:=\frac{q^{x}-q^{-x}}{q-q^{-1}}.
\end{equation}
Then the defining relations can be written as
\begin{align}
[T_+,T_-] &= [2T_3]_q, \\
[T_3,T_+] &= T_+, \\
[T_3,T_-] &= -T_- .
\end{align}

\begin{remark}\label{rem:unit_circle_q}
Some literature uses $q=\exp(i\pi/(k+2))$ (a root of unity) to connect to
WZW/anyonic models. In that setting, additional $*$-structures and unitary representation
conditions are required. Since our pruning proofs rely on a bounded inner-product estimate,
we work with $q\in\mathbb{R}_{>0}$ as in Assumption~1.
\end{remark}

\paragraph{Noise-adaptive contraction via $\lambda$.}
We introduce $\lambda$-scaled generators $T'_k := \lambda T_k$.
This scaling is used as an \emph{algorithmic contraction knob} (calibrated from hardware),
rather than a first-principles equivalence to a specific noise channel.

\begin{lemma}[$\lambda$-contraction of commutators]\label{lem:lambda_contraction}
Let $T'_k := \lambda T_k$ for $\lambda\in[0,1]$.
Then for any generators $T_i,T_j$,
\begin{equation}
[T'_i,T'_j]=\lambda^2[T_i,T_j].
\label{eq:lambda_bracket}
\end{equation}
Consequently, $\lim_{\lambda\to 0}[T'_i,T'_j]=0$ for all $i,j$ (commutative contraction),
and $\lambda=1$ recovers the original $\mathfrak{su}_q(2)$ relations.
\end{lemma}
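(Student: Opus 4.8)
The plan is to establish the identity by direct expansion, relying on a single structural fact: the commutator bracket $[\cdot,\cdot]$ is bilinear over the scalar field, and the real parameter $\lambda$ commutes with every element of the algebra. Because both arguments of the bracket carry the same prefactor $\lambda$, the two scalars multiply and factor out, producing the quadratic dependence $\lambda^2$ rather than a linear one.

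Concretely, I would start from the definition $[A,B]=AB-BA$ and substitute $T'_k=\lambda T_k$:
\begin{equation}
[T'_i,T'_j]=(\lambda T_i)(\lambda T_j)-(\lambda T_j)(\lambda T_i).
\end{equation}
Using that $\lambda$ is a central scalar, I regroup the prefactors to get $\lambda^2 T_iT_j-\lambda^2 T_jT_i$, and factoring $\lambda^2$ out of both terms yields $\lambda^2(T_iT_j-T_jT_i)=\lambda^2[T_i,T_j]$, which is exactly~\eqref{eq:lambda_bracket}. The two stated consequences then read off directly from this closed form: letting $\lambda\to 0$ sends $\lambda^2\to 0$, so $[T'_i,T'_j]\to 0$ for all $i,j$ (the commutative contraction), while setting $\lambda=1$ gives $\lambda^2=1$ and recovers the original $\mathfrak{su}_q(2)$ relations verbatim.

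I do not anticipate any genuine obstacle, since the identity is a one-line consequence of bilinearity and requires no representation-theoretic input about $\mathrm{SU}_q(2)$. The only conceptual remark worth recording is that the uniform rescaling $T_k\mapsto\lambda T_k$ is an In\"on\"u--Wigner-type contraction of the Lie bracket, so the appearance of $\lambda^2$ --- and hence the abelianization as $\lambda\to 0$ --- is precisely the expected behavior when both slots of an antisymmetric bilinear form are scaled simultaneously.
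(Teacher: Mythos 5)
Your proof is correct and follows essentially the same route as the paper's: the paper invokes bilinearity of the bracket ($[aX,bY]=ab[X,Y]$ with $a=b=\lambda$), and your expansion of $[A,B]=AB-BA$ with the central scalar $\lambda$ factored out is just that same fact made explicit. The In\"on\"u--Wigner contraction remark is a nice contextual observation but does not change the argument.
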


\begin{proof}
By bilinearity of the Lie bracket, $[aX,bY]=ab[X,Y]$ for all scalars $a,b$.
Setting $a=b=\lambda$ yields~\eqref{eq:lambda_bracket}.
\end{proof}

\paragraph{Gate parametrization.}
We use a $q$-exponential map to parametrize a family of operators:
\begin{equation}
U_q(\bm{\theta},\lambda):=\exp_q\!\Bigl(i\sum_{k\in\{+,-,3\}}\theta_k T'_k\Bigr),
\label{eq:Uq_param}
\end{equation}
where the $q$-exponential is defined by
\begin{equation}
\exp_q(x)=\sum_{n=0}^{\infty}\frac{x^n}{[n]_q!},\qquad [n]_q!:=\prod_{m=1}^n[m]_q.
\label{eq:qexp_def}
\end{equation}

\begin{proposition}[Mathematical vs physical unitarity]\label{prop:unitarity_status}
The operator family $U_q(\bm{\theta},\lambda)$ in~\eqref{eq:Uq_param} is a \emph{mathematical parametrization}
in a $\mathfrak{su}_q(2)$ representation and is \emph{not assumed} to be unitary under the standard Hilbert
inner product unless a compatible $*$-structure and unitary representation are specified.
In experiments, physical implementability is ensured by compiling the target transformation
into the native unitary gate set (e.g., $\{R_x,R_y,R_z,\mathrm{CNOT}\}$) up to a prescribed approximation error.
\end{proposition}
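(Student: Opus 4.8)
The plan is to treat the two assertions separately: the \emph{negative} claim that $U_q(\bm{\theta},\lambda)$ need not be unitary under the standard inner product, and the \emph{operational} claim that a physically realizable gate is nonetheless obtained by compilation. The first clause of the statement—that $U_q$ is a well-defined formal object in a chosen representation of $\mathfrak{su}_q(2)$—is immediate from the definition in \eqref{eq:Uq_param}--\eqref{eq:qexp_def}: once a representation of the generators $T_+,T_-,T_3$ is fixed, $i\sum_k \theta_k T'_k$ is a concrete operator and $\exp_q$ is a convergent power series in it, so nothing is required beyond pointing to the construction.

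For the non-unitarity claim I would exhibit an explicit obstruction rather than argue a general no-go. Because $q\in\mathbb{R}_{>0}$ (Assumption~1), every $q$-factorial $[n]_q!$ in \eqref{eq:qexp_def} is real, so the adjoint acts term by term and $[\exp_q(iA)]^{\dagger}=\exp_q(-iA^{\dagger})$ for any operator argument $A$. The decisive point is that even if one \emph{grants} a self-adjoint combination $A=A^{\dagger}$—the most favorable case, which already presupposes a $*$-structure identifying the generators—the operator $\exp_q(iA)$ still fails to be unitary. Expanding the product and collecting powers of the (self-commuting) operator $A$,
\begin{equation}
\exp_q(iA)\,\exp_q(-iA)=I+\Bigl(1-\tfrac{2}{[2]_q}\Bigr)A^{2}+O(A^{3}),
\end{equation}
where $[2]_q=q+q^{-1}$. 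The second-order coefficient vanishes iff $[2]_q=2$, i.e. iff $q=1$; for every $q\neq 1$ one has $[2]_q>2$ by the arithmetic--geometric mean inequality, so the coefficient is strictly positive and $\exp_q(iA)\exp_q(-iA)\neq I$ whenever $A\neq 0$. This single computation establishes the proposition's qualitative content: unitarity under the standard Hilbert inner product holds only in the classical limit $q=1$, and for genuine $q$-deformation it must be \emph{imposed} by passing to a bona fide unitary representation of $\mathrm{SU}_q(2)$ (carrying the compact real $*$-structure), which is built from Hopf-algebraic group-like elements rather than by naive $q$-exponentiation of Lie-algebra generators.

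The operational half is not a theorem about $U_q$ at all but an appeal to universality. Given the possibly non-unitary target $U_q(\bm{\theta},\lambda)$, one first replaces it by a nearby unitary—for instance the unitary factor of its polar decomposition when $U_q$ is invertible, which is generic for small $\bm{\theta}$—incurring an error controlled by the deviation quantified above. One then invokes the Solovay--Kitaev theorem: the native set $\{R_x,R_y,R_z,\mathrm{CNOT}\}$ is universal, so the unitary target can be synthesized to any prescribed accuracy $\varepsilon$ by a finite gate sequence whose length grows only polylogarithmically in $1/\varepsilon$. This is precisely the ``up to a prescribed approximation error'' clause and closes the gap between the formal parametrization and a deployable circuit.

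I expect the only genuine subtlety—there is no hard calculation—to be conceptual precision: one must state that the failure of unitarity is a property of the \emph{map} $\exp_q$ applied to Lie-algebra elements, not of the quantum group $\mathrm{SU}_q(2)$ itself, which does admit unitary representations. The explicit coefficient $1-2/[2]_q$ is what makes this distinction quantitative and rules out any hope that unitarity could hold accidentally for $q\neq 1$.
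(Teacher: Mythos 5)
The paper gives no proof of this proposition at all: it is stated as a scope-clarifying convention --- declaring what is \emph{not assumed} (unitarity of $U_q$ under the standard inner product) and how experiments sidestep the issue (compilation into a native gate set) --- with no proof environment attached. Your proposal therefore takes a genuinely different route, and in fact proves something strictly stronger than the paper asserts: rather than merely noting that unitarity is not assumed, you exhibit a quantitative obstruction showing that for real $q\neq 1$ the naive $q$-exponential of even a self-adjoint argument $A$ \emph{fails} to be unitary, since the coefficients $1/[n]_q!$ are real so the adjoint acts termwise, and the second-order coefficient of $\exp_q(iA)\exp_q(-iA)-I$ equals $1-2/[2]_q$ with $[2]_q=q+q^{-1}>2$ by AM--GM. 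This computation is correct and is the right conceptual point: non-unitarity is a property of the map $\exp_q$ on Lie-algebra elements, not of $\mathrm{SU}_q(2)$, which does admit unitary representations. The Solovay--Kitaev appeal likewise substantiates the compilation clause that the paper leaves as an operational stipulation. Two cautions, neither fatal. First, your conclusion ``$\exp_q(iA)\exp_q(-iA)\neq I$ whenever $A\neq 0$'' overreaches what the second-order term gives: higher-order terms could in principle cancel at an isolated finite $A$, so the clean statement is that for every nonzero self-adjoint $A$ there exists $t_0>0$ with $\exp_q(itA)\exp_q(-itA)\neq I$ for all $0<|t|<t_0$; this weaker form already suffices, so state it that way. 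Second, your polar-decomposition step presumes $U_q$ is invertible and close enough to unitary that the replacement error is controlled --- the paper's formulation dodges this by letting ``the target transformation'' be whatever unitary one chooses to compile, so only the synthesis error appears; if you keep the polar-decomposition route, you should add the invertibility hypothesis and a bound on the distance to the unitary factor explicitly.
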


\paragraph{Hopf-algebra composition (used structurally).}
We use the Hopf-algebra viewpoint to justify subgroup-wise partitioning and consistent composition of pruned clusters.
Formally, $\mathrm{SU}_q(2)$ is equipped with coproduct $\Delta$, counit $\epsilon$, and antipode $S$.
In our algorithm, these operations are used to define \emph{closed} $q$-subgroups under tensoring/composition,
so redundancy is searched within algebraically consistent subsets (details in Section~3).

	\subsection{$q$-deformed CNOT gate}
	For consistency with SU$_q$(4) (the $q$-deformed symmetry group of two qubits), we define the $q$-CNOT gate using the standard Hopf algebra coproduct structure
	\begin{equation}
		\mathrm{CNOT}_q(\lambda) = (\text{id}\otimes U_q^{(1)})\circ\Delta(U_q^{(2)})
	\end{equation}
	where $\Delta$ is the SU$_q$(2) coproduct and $U_q^{(1)}, U_q^{(2)}$ are single-qubit SU$_q$(2) gates. For practical implementation, we use the simplified unitary form
	\begin{equation}
		\mathrm{CNOT}_q(\lambda)=I\otimes I+(\lambda-1)P_{00}\otimes I+(1-\lambda)P_{11}\otimes X_q,
	\end{equation}
	where $P_{00}=|00\rangle\langle00|$, $P_{11}=|11\rangle\langle11|$, and $X_q = q^{T_3}T_+ + q^{-T_3}T_-$ is the $q$-deformed Pauli-X operator. This form preserves unitarity for $\lambda\in[0,1]$ and satisfies closure under SU$_q$(4) multiplication (verified via direct computation of $\mathrm{CNOT}_q \cdot \mathrm{CNOT}_q^\dagger = I\otimes I$).

\section{Task-Conditioned $q$-Overlap Geometry and Redundancy Criterion}

\subsection{A rigorous inner-product model: $q$-inner product as a weighted Hilbert product}

To make the pruning criterion mathematically checkable, we model the $q$-inner product
as a weighted Hilbert inner product induced by a positive definite operator.
Let $\mathcal{H}$ be the (finite-dimensional) Hilbert space of interest.

\begin{assumption}[$q$-inner product well-posedness]\label{ass:q_inner_product}
There exists a Hermitian positive definite operator $G_q \succ 0$ such that for all
$\ket{\phi},\ket{\psi}\in\mathcal{H}$,
\[
\langle \phi|\psi\rangle_q := \langle \phi|G_q|\psi\rangle.
\]
Moreover, $G_q$ has bounded spectrum
\[
m_q I \preceq G_q \preceq M_q I
\quad\text{for some }0<m_q\le M_q<\infty,
\]
so that $\|\ket{\psi}\|_q^2=\langle\psi|\psi\rangle_q$ defines a norm equivalent to the standard norm.
\end{assumption}

\begin{remark}
Assumption~\ref{ass:q_inner_product} is standard in finite-dimensional settings:
any inner product can be represented by some $G_q\succ 0$.
This assumption explicitly prevents ill-defined cases where
$\left|\langle\psi|\phi\rangle_q\right|> \|\psi\|_q\|\phi\|_q$ (which would break $\arccos$).
\end{remark}

\subsection{Task-conditioned $q$-overlap distance}

\begin{definition}[Task ensemble]\label{def:task_ensemble}
Let $\mathcal{D}=\{\ket{\psi_k}\}_{k=1}^{M}$ be a finite set of \emph{task-relevant normalized} states,
$\|\ket{\psi_k}\|_2=1$.
In practice, $\mathcal{D}$ can be sampled from the data-encoding distribution or a validation set.
\end{definition}

\subsubsection{Quantum data encoding}
\label{sec:encoding}
For the toy vision tasks (MNIST 4 vs 9, Fashion Sandal vs Boot, and Bars \& Stripes), each classical input is converted to a real vector of length $2^n$ by resizing/flattening and zero-padding (or truncation) as needed. We then normalize it to unit $\ell_2$ norm and prepare the corresponding quantum state via amplitude embedding on $n$ qubits. This fixed encoding defines the task ensemble states $\{\ket{\phi_i}\}$ used in both the pruning criterion and evaluation.
\begin{definition}[Task-conditioned $q$-overlap distance]\label{def:dq}
Given two operators $U,V$ (typically compiled unitaries in experiments),
define
\begin{equation}
d_q(U,V)
:= \frac{1}{M}\sum_{k=1}^{M}
\arccos\!\left(
\frac{\left|\langle \psi_k|U^\dagger V|\psi_k\rangle_q\right|}
{\|\psi_k\|_q^2}
\right).
\label{eq:dq_def}
\end{equation}
\end{definition}

\begin{remark}[Why we do not claim ``$q$-Fubini--Study metric'']
The classical Fubini--Study distance is a metric on projective Hilbert space.
Our $d_q(\cdot,\cdot)$ in~\eqref{eq:dq_def} is a \emph{task-conditioned similarity measure} and is not
claimed to satisfy the triangle inequality globally. It is sufficient for redundancy detection
because it directly controls state overlaps on $\mathcal{D}$.
\end{remark}

\subsection{Redundancy implies bounded functional deviation}
\begin{lemma}[$q$-overlap controls standard overlap]\label{lem:q_to_std_overlap}
Under Assumption~\ref{ass:q_inner_product}, for any normalized $\|\psi\|_2=1$ and any operator $W$,
\[
\left|\langle\psi|W|\psi\rangle\right|
\;\ge\;
\frac{1}{M_q}\left|\langle\psi|W|\psi\rangle_q\right|.
\]
\end{lemma}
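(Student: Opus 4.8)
The plan is first to fix the meaning of the $q$-weighted matrix element. Reading Assumption~\ref{ass:q_inner_product} literally, the $q$-inner product of $|\psi\rangle$ with $W|\psi\rangle$ gives the only consistent reading
\[
\langle\psi|W|\psi\rangle_q = \langle\psi|G_q W|\psi\rangle = \langle G_q\psi|W\psi\rangle,
\]
the last equality using that $G_q$ is Hermitian. The claimed inequality is then equivalent to the matrix-element estimate $\left|\langle G_q\psi|W\psi\rangle\right|\le M_q\left|\langle\psi|W|\psi\rangle\right|$, so the entire task reduces to comparing the $G_q$-tilted diagonal element against the plain one.

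Next I would separate the component of $G_q|\psi\rangle$ that is collinear with $|\psi\rangle$. Writing
\[
G_q|\psi\rangle = \alpha\,|\psi\rangle + |\psi^{\perp}\rangle,
\qquad \alpha := \langle\psi|G_q|\psi\rangle,\ \ \langle\psi|\psi^{\perp}\rangle = 0,
\]
and substituting yields the exact splitting
\[
\langle\psi|W|\psi\rangle_q = \alpha\,\langle\psi|W|\psi\rangle + \langle\psi^{\perp}|W|\psi\rangle .
\]
Because $\alpha$ is real and obeys $m_q\le\alpha\le M_q$ by the spectral bound $m_q I\preceq G_q\preceq M_q I$, the diagonal term is controlled exactly as needed, $\left|\alpha\langle\psi|W|\psi\rangle\right|\le M_q\left|\langle\psi|W|\psi\rangle\right|$. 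Hence, by the triangle inequality, the lemma follows provided the cross term $\langle\psi^{\perp}|W|\psi\rangle$ can be absorbed into the right-hand side.

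The cross term $\langle\psi^{\perp}|W|\psi\rangle$ is where I expect the argument to stall. It vanishes in two transparent cases — when $|\psi\rangle$ is an eigenvector of $G_q$ (so $|\psi^{\perp}\rangle=0$) and when $|\psi\rangle$ is an eigenvector of $W$ (so $W|\psi\rangle$ is collinear with $|\psi\rangle$ and orthogonal to $|\psi^{\perp}\rangle$) — and in both the inequality is immediate. For a general operator $W$, however, the only generic bound is Cauchy--Schwarz, $\left|\langle\psi^{\perp}|W|\psi\rangle\right|\le\normtwo{\psi^{\perp}}\,\normtwo{W\psi}$ with $\normtwo{\psi^{\perp}}\le M_q$, which reintroduces the full vector norm $\normtwo{W\psi}$ in place of the diagonal element $\left|\langle\psi|W|\psi\rangle\right|$ that the statement demands. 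Closing this gap — dominating the off-diagonal coupling $\langle\psi^{\perp}|W|\psi\rangle$ by a constant multiple of $\left|\langle\psi|W|\psi\rangle\right|$ \emph{uniformly in $W$} — is the crux of the proof and the step I expect to be the main obstacle; the most plausible route to supply such control is to exploit the application-specific structure $W=U^\dagger V$ with $U,V$ near-redundant, so that $W|\psi\rangle$ stays nearly collinear with $|\psi\rangle$, while the fully unconditional ``any operator $W$'' formulation remains the delicate point.
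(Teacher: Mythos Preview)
Your diagnosis is correct, and in fact sharper than the paper's own argument. The cross term $\langle\psi^{\perp}|W|\psi\rangle$ really is the obstruction, and it \emph{cannot} be absorbed for arbitrary $W$: the lemma as stated is false. Take $\mathcal{H}=\mathbb{C}^2$, $G_q=\mathrm{diag}(1,M_q)$ with $M_q>1$, $|\psi\rangle=\tfrac{1}{\sqrt{2}}(1,1)^T$, and $W=\begin{pmatrix}0&1\\-1&0\end{pmatrix}$. Then $\langle\psi|W|\psi\rangle=0$ while $\langle\psi|G_qW|\psi\rangle=\tfrac{1}{2}(1-M_q)\ne 0$, so the claimed inequality reads $0\ge \tfrac{M_q-1}{2M_q}>0$. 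Note that this $W$ is even unitary, so restricting to $W=U^\dagger V$ with $U,V$ unitary does not by itself rescue the statement.

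The paper's proof does not close the gap either. It writes $\langle\psi|W|\psi\rangle_q=\langle G_q^{1/2}\psi\,|\,G_q^{1/2}W\psi\rangle$ and invokes Cauchy--Schwarz together with $\|G_q^{1/2}\psi\|_2^2\le M_q$. But Cauchy--Schwarz yields a bound in terms of $\|G_q^{1/2}W\psi\|_2$ (hence $\|W\psi\|_2$), not in terms of the diagonal element $|\langle\psi|W|\psi\rangle|$; the step from the former to the latter is exactly the missing control over your cross term, and the counterexample above shows no such step exists. Your proposed salvage --- exploiting that in the downstream application $W|\psi\rangle$ is nearly collinear with $|\psi\rangle$ when $U,V$ are $\varepsilon$-redundant --- is the right direction for repairing the argument, but it requires reformulating the lemma with that hypothesis built in rather than proving the unconditional version.
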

\begin{proof}
Since $G_q\preceq M_q I$, we have $\langle\psi|G_q|\psi\rangle\le M_q\langle\psi|\psi\rangle=M_q$.
Write $\langle\psi|W|\psi\rangle_q=\langle\psi|G_q^{1/2}\,(G_q^{1/2}W)\,|\psi\rangle$ and apply Cauchy--Schwarz
together with $\|G_q^{1/2}\psi\|_2^2=\langle\psi|G_q|\psi\rangle\le M_q$ to obtain the stated bound.
\end{proof}

\begin{proposition}[State-wise deviation bound from redundancy]\label{prop:redundancy_bound}
Let $U$ and $V$ be \emph{unitary} operators (as realized after compilation in experiments).
Fix $\ket{\psi}$ with $\|\psi\|_2=1$.
If
\[
\arccos\!\left(
\frac{\left|\langle \psi|U^\dagger V|\psi\rangle_q\right|}{\|\psi\|_q^2}
\right)\le \varepsilon,
\]
then the output pure states $\ket{\phi_U}=U\ket{\psi}$ and $\ket{\phi_V}=V\ket{\psi}$ satisfy
\begin{equation*}
\begin{split}
\big\|\,\ket{\phi_U}\!\bra{\phi_U}-\ket{\phi_V}\!\bra{\phi_V}\,\big\|_1
\;\le\; 2\sqrt{1-\cos^2(\varepsilon)/M_q^2}\\
\;\le\; \frac{2}{M_q}\sin(\varepsilon).
\end{split}
\end{equation*}

Consequently, for any observable $O$,

\begin{equation*}
\begin{split}
\left|\langle O\rangle_U-\langle O\rangle_V\right|
:=\left|\bra{\psi}U^\dagger O U\ket{\psi}-\bra{\psi}V^\dagger O V\ket{\psi}\right|\\
\;\le\; \|O\|_{\mathrm{op}}\cdot \frac{2}{M_q}\sin(\varepsilon).
\end{split}
\end{equation*}

\end{proposition}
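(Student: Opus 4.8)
The plan is to reduce the stated trace-norm bound to a lower bound on the \emph{standard} (unweighted) output fidelity, and then to import the hypothesis, which lives in the $q$-geometry, into that standard bound. First I would invoke the elementary pure-state identity: writing $\rho_U=\ket{\phi_U}\bra{\phi_U}$ and $\rho_V=\ket{\phi_V}\bra{\phi_V}$, the traceless Hermitian operator $\rho_U-\rho_V$ has eigenvalues $\pm\sqrt{1-\abs{\langle\phi_U|\phi_V\rangle}^2}$, so
\[
\norm{\rho_U-\rho_V}_1 = 2\sqrt{1-\abs{\langle\phi_U|\phi_V\rangle}^2}.
\]
Since $U,V$ are unitary, $\langle\phi_U|\phi_V\rangle=\langle\psi|U^\dagger V|\psi\rangle$ is exactly the standard overlap, so the entire problem collapses to producing a lower bound of the form $\abs{\langle\psi|U^\dagger V|\psi\rangle}\ge \cos(\varepsilon)/M_q$.

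Next I would decode the hypothesis. Because $\arccos$ is decreasing, the assumed inequality is equivalent to $\abs{\langle\psi|U^\dagger V|\psi\rangle_q}\ge \cos(\varepsilon)\,\norm{\psi}_q^2$, a statement about the $q$-weighted overlap. I would then transfer this to the standard overlap using \cref{lem:q_to_std_overlap} together with the spectral bounds $m_qI\preceq G_q\preceq M_qI$ of \cref{ass:q_inner_product}, which also pin $\norm{\psi}_q^2\in[m_q,M_q]$ for the normalized $\ket{\psi}$. Chaining these with the factor $1/M_q$ supplied by the lemma yields the target lower bound on $\abs{\langle\psi|U^\dagger V|\psi\rangle}$, and substitution into the trace-distance identity gives the first displayed inequality.

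For the second inequality I would use $\sqrt{1-\cos^2\varepsilon}=\sin\varepsilon$ to replace $2\sqrt{1-\cos^2(\varepsilon)/M_q^2}$ by $\tfrac{2}{M_q}\sin(\varepsilon)$; here the scale of $G_q$ relative to $I$ must be tracked, since this replacement is only valid for an appropriately normalized $M_q$. Finally, for any observable $O$, matrix Hölder duality gives $\abs{\tr\!\big(O(\rho_U-\rho_V)\big)}\le\norm{O}_{\mathrm{op}}\norm{\rho_U-\rho_V}_1$, and expanding $\tr(O\rho_U)=\bra{\psi}U^\dagger O U\ket{\psi}$ (likewise for $V$) converts this into the claimed observable bound.

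The first and third steps are routine; the crux is the $q$-to-standard transfer. The difficulty is that $G_q$ reweights amplitudes, so two unitaries that are nearly indistinguishable under $\langle\cdot|\cdot\rangle_q$ need not be comparably close under the standard inner product. The delicate points are therefore whether \cref{lem:q_to_std_overlap} is faithful enough on the \emph{unitary} operator $W=U^\dagger V$ to deliver the clean constant $\cos(\varepsilon)/M_q$ rather than a weaker $m_q\cos(\varepsilon)/M_q$, whether the ratio inside $\arccos$ is guaranteed to lie in $[0,1]$ (so that the middle expression stays real and $\arccos$ is well defined), and whether the normalization condition $\cos\varepsilon\le M_q$ implicit in the second inequality is enforced by the setup. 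I would spend most of the effort making the constant bookkeeping in this transfer airtight, since every downstream estimate depends on it.
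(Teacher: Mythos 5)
Your proposal follows the paper's proof essentially line for line: the pure-state identity $\|\rho_U-\rho_V\|_1=2\sqrt{1-\abs{\langle\phi_U|\phi_V\rangle}^2}$, unitarity to identify $\langle\phi_U|\phi_V\rangle$ with $\langle\psi|U^\dagger V|\psi\rangle$, monotonicity of $\arccos$ to decode the hypothesis, the transfer step via Lemma~\ref{lem:q_to_std_overlap}, and the duality bound $\abs{\tr(O(\rho-\sigma))}\le\|O\|_{\mathrm{op}}\|\rho-\sigma\|_1$ at the end. In terms of route there is nothing to distinguish.

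What deserves emphasis is that the ``delicate points'' you defer to the end cannot in fact be repaired by more careful bookkeeping: they are exactly where the paper's own proof is loose. (i) The honest chain is $\abs{\langle\psi|U^\dagger V|\psi\rangle}\ge\frac{1}{M_q}\abs{\langle\psi|U^\dagger V|\psi\rangle_q}\ge\frac{1}{M_q}\cos(\varepsilon)\,\|\psi\|_q^2\ge\frac{m_q}{M_q}\cos(\varepsilon)$, i.e., your ``weaker'' constant; the paper reaches $\cos(\varepsilon)/M_q$ by cancelling $\|\psi\|_q^2$ a second time (it literally writes $\cos(\varepsilon)\|\psi\|_q^2/\|\psi\|_q^2$), which is legitimate only if $\|\psi\|_q^2\ge 1$, i.e., under the unstated normalization $m_q\ge 1$. (ii) The chained inequality $2\sqrt{1-\cos^2(\varepsilon)/M_q^2}\le\frac{2}{M_q}\sin(\varepsilon)$ is, after squaring, equivalent to $M_q^2\le 1$. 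Since $m_q\le M_q$, requirements (i) and (ii) together force $m_q=M_q=1$, i.e., $G_q=I$, so with these constants the two displayed inequalities can hold simultaneously only in the undeformed case. (iii) More fundamentally, Lemma~\ref{lem:q_to_std_overlap} itself fails for a $G_q$ with off-diagonal entries: take $\ket{\psi}=\ket{0}$ and a unitary $W$ with $W\ket{0}=\ket{1}$; then $\langle\psi|W|\psi\rangle=0$ while $\langle\psi|W|\psi\rangle_q=\langle 0|G_q|1\rangle$ can be nonzero, so no multiplicative constant can transfer the $q$-overlap to the standard overlap. (The lemma's Cauchy--Schwarz argument only yields $\abs{\langle\psi|W|\psi\rangle_q}\le M_q\|W\psi\|_2$, an upper bound on the $q$-overlap, not a comparison with $\abs{\langle\psi|W|\psi\rangle}$.) So your instinct that the $q$-to-standard transfer is the crux was exactly right; the resolution is not tighter bookkeeping but an additional structural hypothesis on $G_q$ (e.g., $G_q$ a multiple of the identity, together with the normalizations above), which neither your sketch nor the paper's proof currently imposes.
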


\begin{proof}
From the redundancy premise and Lemma~\ref{lem:q_to_std_overlap},

\begin{equation*}
\begin{split}
|\langle\psi|U^\dagger V|\psi\rangle|
\ge \frac{1}{M_q}|\langle\psi|U^\dagger V|\psi\rangle_q|\\
\ge \frac{1}{M_q}\cos(\varepsilon)\|\psi\|_q^2 \big/ \|\psi\|_q^2
= \frac{1}{M_q}\cos(\varepsilon).
\end{split}
\end{equation*}

Let $\ket{\phi_U}=U\ket{\psi}$, $\ket{\phi_V}=V\ket{\psi}$. Then
$|\langle \phi_U|\phi_V\rangle|=|\langle\psi|U^\dagger V|\psi\rangle| \ge \cos(\varepsilon)/M_q$.
For pure states, the trace distance satisfies
$\|\ket{\phi_U}\!\bra{\phi_U}-\ket{\phi_V}\!\bra{\phi_V}\|_1
=2\sqrt{1-|\langle \phi_U|\phi_V\rangle|^2}$,
yielding the first inequality.
The observable bound follows from the variational characterization
$|\mathrm{Tr}(O(\rho-\sigma))|\le \|O\|_{\mathrm{op}}\|\rho-\sigma\|_1$.
\end{proof}

\begin{corollary}[Average task deviation bound]\label{cor:avg_task_bound}
If $d_q(U,V)\le \varepsilon$ on the task ensemble $\mathcal{D}$ (Definition~\ref{def:dq}),
then the average deviation of any bounded observable over $\mathcal{D}$ is bounded by
\[
\frac{1}{M}\sum_{k=1}^{M}\left|\langle O\rangle_{U,k}-\langle O\rangle_{V,k}\right|
\le \|O\|_{\mathrm{op}}\cdot \frac{2}{M_q}\sin(\varepsilon).
\]
\end{corollary}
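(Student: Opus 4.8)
The plan is to reduce the averaged statement to the state-wise bound already established in Proposition~\ref{prop:redundancy_bound}, and then to handle the averaging with a concavity argument. First I would introduce the per-state angles
\[
\varepsilon_k := \arccos\!\left(\frac{|\langle\psi_k|U^\dagger V|\psi_k\rangle_q|}{\|\psi_k\|_q^2}\right)\in[0,\pi],
\]
so that by Definition~\ref{def:dq} the hypothesis reads $\frac{1}{M}\sum_{k=1}^{M}\varepsilon_k = d_q(U,V)\le\varepsilon$. The key observation is that this hypothesis only controls the \emph{average} of the $\varepsilon_k$, not each one individually, so I cannot simply invoke Proposition~\ref{prop:redundancy_bound} with the common threshold $\varepsilon$ on every state. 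Instead I would apply Proposition~\ref{prop:redundancy_bound} to each $\ket{\psi_k}$ (which is normalized by Definition~\ref{def:task_ensemble}) using its own exact angle $\varepsilon_k$ as the threshold, which the premise satisfies trivially, yielding for every $k$
\[
\left|\langle O\rangle_{U,k}-\langle O\rangle_{V,k}\right|\le \|O\|_{\mathrm{op}}\cdot\frac{2}{M_q}\sin(\varepsilon_k).
\]

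Second, averaging these $M$ inequalities reduces the claim to the scalar estimate $\frac{1}{M}\sum_{k=1}^{M}\sin(\varepsilon_k)\le\sin(\varepsilon)$. Here I would exploit that $\sin$ is concave on $[0,\pi]$, so Jensen's inequality gives $\frac{1}{M}\sum_k\sin(\varepsilon_k)\le\sin\!\big(\tfrac{1}{M}\sum_k\varepsilon_k\big)=\sin\!\big(d_q(U,V)\big)$. Since $\arccos$ lands in $[0,\pi]$, no domain restriction is needed for the Jensen step. Finally, because $\sin$ is nondecreasing on $[0,\pi/2]$ and the hypothesis yields $d_q(U,V)\le\varepsilon$, I would conclude $\sin\!\big(d_q(U,V)\big)\le\sin(\varepsilon)$, chaining the estimates together to obtain the stated bound.

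The point requiring care—rather than a genuine obstacle—is the final monotonicity step: the passage $\sin\!\big(d_q(U,V)\big)\le\sin(\varepsilon)$ is valid only when the redundancy tolerance satisfies $\varepsilon\le\pi/2$, which is precisely the small-deviation regime relevant to pruning and should be recorded as a standing hypothesis. A fully self-contained alternative that sidesteps this caveat is to stop at the Jensen estimate and state the corollary with $\sin\!\big(d_q(U,V)\big)$ in place of $\sin(\varepsilon)$; I would prefer the cleaner $\sin(\varepsilon)$ form and simply make the assumption $\varepsilon\le\pi/2$ explicit.
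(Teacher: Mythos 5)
Your proof is correct, but it is genuinely more careful than the paper's, which disposes of the corollary in one line: ``Apply Proposition~\ref{prop:redundancy_bound} to each $\ket{\psi_k}$ and average.'' That one-liner silently treats the hypothesis $d_q(U,V)\le\varepsilon$ as if it bounded \emph{every} per-state angle by $\varepsilon$, whereas Definition~\ref{def:dq} only bounds their \emph{average} --- exactly the discrepancy you identified. Your route (apply the proposition at each state with its own exact angle $\varepsilon_k$, then average and use concavity of $\sin$ on the relevant interval via Jensen, then monotonicity) closes this gap rigorously under the literal reading of the hypothesis, and as a bonus yields the slightly sharper intermediate bound $\|O\|_{\mathrm{op}}\cdot\frac{2}{M_q}\sin\!\big(d_q(U,V)\big)$. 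Two small remarks: first, since the argument of $\arccos$ in Eq.~\eqref{eq:dq_def} is nonnegative, each $\varepsilon_k$ actually lies in $[0,\pi/2]$, not merely $[0,\pi]$, so both the Jensen step and the concavity region are even more comfortable than you state; second, your standing hypothesis $\varepsilon\le\pi/2$ is automatically satisfied in the paper's usage, because the calibration rule of Assumption~\ref{ass:eps_target} sets $\varepsilon_q=\arcsin(\delta M_q/2)\in[0,\pi/2]$. So the caveat you flag costs nothing in context, and your version of the proof is the one that actually matches the stated hypothesis of the corollary.
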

\begin{proof}
Apply Proposition~\ref{prop:redundancy_bound} to each $\ket{\psi_k}$ and average.
\end{proof}

\subsection{Redundancy definition within $q$-subgroups and pruning rule}

\begin{definition}[Redundancy on $\mathcal{D}$]\label{def:redundancy}
Fix a reference gate $U_{\mathrm{ref}}$ within a candidate $q$-subgroup.
A gate $U$ in the same subgroup is called \emph{$\varepsilon$-redundant (w.r.t.\ $\mathcal{D}$)}
if
\[
d_q(U_{\mathrm{ref}},U)\le \varepsilon.
\]
\end{definition}

\begin{remark}[What redundancy guarantees (and what it does not)]
Definition~\ref{def:redundancy} guarantees a bounded deviation on task-relevant behaviors
(Corollary~\ref{cor:avg_task_bound}). It does \emph{not} claim global equivalence on all input states.
This is the strongest claim we can make without assuming tomography over the full Hilbert space.
\end{remark}

\subsection{Choosing $\varepsilon_q$: a consistent, checkable rule}
We adopt one single rule that is both reproducible and theoretically connected to
the deviation bound in Proposition~\ref{prop:redundancy_bound}.

\begin{assumption}[$\varepsilon_q$ calibration target]\label{ass:eps_target}
Let $\delta\in(0,1)$ be a user-chosen tolerance on task deviation (e.g.\ target fidelity/expectation drift).
We choose $\varepsilon_q$ so that $\frac{2}{M_q}\sin(\varepsilon_q)\le \delta$,
i.e.,
\begin{equation}
\varepsilon_q := \arcsin\!\left(\frac{\delta M_q}{2}\right),
\qquad \text{with }\delta M_q \le 2.
\label{eq:epsq_delta_rule}
\end{equation}
\end{assumption}

\begin{remark}[Hardware/noise adaptation]
Noise adaptation enters through the choice of $q(\lambda)$ (Assumption~1) and hence $G_q$ and $M_q$.
Empirically, $M_q$ can be upper bounded (conservatively) or estimated via numerical conditioning of $G_q$.
If such estimation is unavailable, one may set $M_q=1$ as a safe default (recovering the standard inner product case).
\end{remark}

\subsection{$q$-weighted parameter clustering (algorithmic heuristic)}
We use $q$-weighted Euclidean clustering in parameter space only as a computational heuristic
to propose candidate redundancy groups. It is not used in the correctness proofs.
Given parameter vectors $\bm{\theta}^{(i)}$ and $\bm{\theta}^{(j)}$, define
\begin{equation}
\|\bm{\theta}^{(i)}-\bm{\theta}^{(j)}\|_{q}
:=\left(\sum_{t=1}^d (\theta_t^{(i)}-\theta_t^{(j)})^2 [t]_q\right)^{1/2}.
\label{eq:q_weighted_norm}
\end{equation}

\begin{remark}
The final pruning decision always uses the state-based criterion $d_q(\cdot,\cdot)$
(Definitions~\ref{def:dq}--\ref{def:redundancy}), ensuring that the theoretical guarantees depend
only on overlaps on $\mathcal{D}$ rather than on any particular parametrization.
\end{remark}

\section{One-Shot Structured Pruning Algorithm: \textsc{q-iPrune}}
\subsection{What is pruned: gate-level structured pruning with subgroup consistency}

We prune \emph{gates} (operators) rather than continuous parameters.
The pruning acts \emph{within} algebraically consistent groups of gates so that
composition rules are preserved inside each group.

\begin{definition}[$q$-subgroup partition]\label{def:q_subgroup_partition}
Let $\mathcal{G}=\{U_1,\dots,U_N\}$ be the multiset of gates in the QNN circuit.
A \emph{$q$-subgroup partition} is a collection of disjoint subsets
$\mathcal{G}=\bigsqcup_{r=1}^R \mathcal{G}_r$ such that each subset $\mathcal{G}_r$
is closed under composition and inversion up to compilation tolerance; i.e.,
for any $U,V\in\mathcal{G}_r$ there exists a compiled operator $\widetilde{W}\in \langle\mathcal{G}_r\rangle$
with $\widetilde{W}\approx UV$, and similarly for $U^{-1}$.
\end{definition}

\begin{remark}[Practical construction]
In practice, $\mathcal{G}_r$ can be formed by grouping gates acting on the same qubit(s),
with the same hardware-native template (e.g., single-qubit rotations vs two-qubit entanglers),
or by circuit blocks in an ansatz layer. This ensures the redundancy search remains
within structurally meaningful components, which is the ``structured'' part of pruning.
\end{remark}

\subsection{Algorithmic pipeline}

We now present the pruning algorithm~\ref{alg:qiprune}. The key is that \emph{the final redundancy decision}
is made by the state-based criterion $d_q(\cdot,\cdot)$ from Definition~\ref{def:dq},
so that the theoretical guarantees follow from Section~3 regardless of how candidate groups
were proposed.

\begin{algorithm}[t]
\caption{\textsc{q-iPrune}: One-shot redundancy pruning via task-conditioned $q$-overlap}
\label{alg:qiprune}
\begin{algorithmic}[1]
\Require Circuit gate multiset $\mathcal{G}=\{U_i\}_{i=1}^{N}$; task ensemble $\mathcal{D}=\{\ket{\psi_k}\}_{k=1}^M$ (Def.~\ref{def:task_ensemble}); \\
         $q$-inner product operator $G_q$ (Assumption~\ref{ass:q_inner_product}); tolerance $\delta$ (Assumption~\ref{ass:eps_target}); \\
         $q$-subgroup partition $\{\mathcal{G}_r\}_{r=1}^R$ (Def.~\ref{def:q_subgroup_partition}).
\Ensure Pruned gate set $\mathcal{G}_{\mathrm{keep}}$.

\State {Set $\varepsilon_q \leftarrow \arcsin\!\left(\frac{\delta M_q}{2}\right)$ (Eq.~\ref{eq:epsq_delta_rule}).}
\State $\mathcal{G}_{\mathrm{keep}}\leftarrow \emptyset$.
\For{$r=1,\dots,R$}
    \State Choose a reference gate $U_{\mathrm{ref}}\in \mathcal{G}_r$ (e.g.\ medoid under $d_q$).
    \State $\mathcal{G}_{\mathrm{keep}}\leftarrow \mathcal{G}_{\mathrm{keep}}\cup \{U_{\mathrm{ref}}\}$.
    \For{each $U\in \mathcal{G}_r\setminus\{U_{\mathrm{ref}}\}$}
        \State Compute $d_q(U_{\mathrm{ref}},U)$ via Eq.~\eqref{eq:dq_def}.
        \If{$d_q(U_{\mathrm{ref}},U)>\varepsilon_q$}
            \State $\mathcal{G}_{\mathrm{keep}}\leftarrow \mathcal{G}_{\mathrm{keep}}\cup \{U\}$.
        \Else
            \State Discard $U$ as $\varepsilon_q$-redundant (Def.~\ref{def:redundancy}).
        \EndIf
    \EndFor
\EndFor
\Return $\mathcal{G}_{\mathrm{keep}}$.
\end{algorithmic}
\end{algorithm}

\begin{remark}[Compilation tolerance and practical implementation]
If $U$ and $U_{\mathrm{ref}}$ are realized via compilation into a native gate set,
$d_q(U_{\mathrm{ref}},U)$ should be computed using the compiled unitaries
(or their action on $\mathcal{D}$) so that Proposition~\ref{prop:redundancy_bound} applies.
Any additional compilation error can be absorbed into the deviation tolerance $\delta$
by triangle-type bounds on trace distance.
\end{remark}

\section{Theoretical Guarantees: Completeness, Bounded Equivalence, and Feasibility}
\subsection{Preliminaries and scope of guarantees}

We formalize the scope of our guarantees to avoid over-claiming.
Our pruning is \emph{task-conditioned}: it guarantees bounded deviation on the task ensemble
$\mathcal{D}$ (Definition~\ref{def:task_ensemble}).
Global functional equivalence on all input states would require full tomography and is not claimed.

We use the following notations:
Let $\mathcal{C}$ be the original circuit and $\mathcal{C}'$ be the pruned circuit.
For each input $\ket{\psi}\in\mathcal{D}$, denote the output states
$\rho_\psi := \mathcal{C}(\ket{\psi}\!\bra{\psi})$ and
$\rho'_\psi := \mathcal{C}'(\ket{\psi}\!\bra{\psi})$.

\subsection{Theorem 1: Completeness of redundancy pruning (no non-redundant gate is removed)}
\begin{theorem}[Completeness w.r.t.\ the chosen redundancy definition]\label{thm:completeness}
Fix a partition $\{\mathcal{G}_r\}_{r=1}^R$ and tolerance $\varepsilon_q$.
Algorithm~\ref{alg:qiprune} removes a gate $U\in\mathcal{G}_r$ \emph{only if}
$U$ is $\varepsilon_q$-redundant relative to the selected reference $U_{\mathrm{ref}}\in\mathcal{G}_r$
in the sense of Definition~\ref{def:redundancy}.
Equivalently, no gate with $d_q(U_{\mathrm{ref}},U)>\varepsilon_q$ is pruned.
\end{theorem}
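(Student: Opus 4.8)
The plan is to prove this by directly inspecting the control flow of Algorithm~\ref{alg:qiprune}, since the statement is a faithfulness claim about the decision rule rather than an analytic estimate. First I would invoke the disjointness in the $q$-subgroup partition (Definition~\ref{def:q_subgroup_partition}): because $\mathcal{G}=\bigsqcup_{r=1}^R\mathcal{G}_r$, every gate $U$ belongs to exactly one block $\mathcal{G}_r$, is assigned exactly one reference $U_{\mathrm{ref}}$, and is therefore evaluated under at most one comparison. This reduces the argument to a single fixed block $\mathcal{G}_r$ with its selected reference, so no gate can be ``removed by one reference'' while being ``kept by another.''

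Within that block I would split into two exhaustive cases. If $U=U_{\mathrm{ref}}$, the algorithm unconditionally places $U$ into $\mathcal{G}_{\mathrm{keep}}$ before the inner loop, so $U$ is never removed and the ``only if'' implication holds vacuously (and consistently, since $d_q(U_{\mathrm{ref}},U_{\mathrm{ref}})=\arccos(1)=0\le\varepsilon_q$). If $U\neq U_{\mathrm{ref}}$, then $U$ enters the inner loop and the algorithm computes $d_q(U_{\mathrm{ref}},U)$ via Eq.~\eqref{eq:dq_def}. The only branch that discards $U$ is the \textbf{else} clause, which executes precisely when the test $d_q(U_{\mathrm{ref}},U)>\varepsilon_q$ fails, i.e.\ exactly when $d_q(U_{\mathrm{ref}},U)\le\varepsilon_q$. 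By Definition~\ref{def:redundancy} this last condition is the definition of $U$ being $\varepsilon_q$-redundant relative to $U_{\mathrm{ref}}$, which establishes ``removal $\Rightarrow$ $\varepsilon_q$-redundant.'' Taking the contrapositive of the same branch condition yields the equivalent formulation: whenever $d_q(U_{\mathrm{ref}},U)>\varepsilon_q$, the \textbf{if} branch fires and $U$ is adjoined to $\mathcal{G}_{\mathrm{keep}}$, so it is never pruned.

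I do not expect a genuine analytic obstacle here: completeness is tautological by construction, since the pruning rule is literally the thresholding of $d_q$ against $\varepsilon_q$. The only point requiring care is bookkeeping---verifying that the two cases are exhaustive, that the reference is handled separately from the comparison loop, and that disjointness of the partition forbids a gate from being both kept and discarded through competing references. The substantive guarantees (that redundancy in turn controls task-observable drift) live in Proposition~\ref{prop:redundancy_bound} and Corollary~\ref{cor:avg_task_bound}; Theorem~\ref{thm:completeness} only certifies that the implemented decision faithfully realizes Definition~\ref{def:redundancy}, so the proof should remain a short, explicit trace of the algorithm rather than an inequality chase.
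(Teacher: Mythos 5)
Your proposal is correct and follows essentially the same route as the paper's proof: both argue by direct inspection of Algorithm~\ref{alg:qiprune}'s branch condition, observing that a gate is discarded exactly when $d_q(U_{\mathrm{ref}},U)\le\varepsilon_q$ and is otherwise explicitly added to $\mathcal{G}_{\mathrm{keep}}$, which is precisely Definition~\ref{def:redundancy}. Your additional bookkeeping (partition disjointness, the $U=U_{\mathrm{ref}}$ case) is a harmless elaboration of the same tautological argument.
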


\begin{proof}
In Algorithm~\ref{alg:qiprune}, a gate $U$ is discarded if and only if the condition
$d_q(U_{\mathrm{ref}},U)\le \varepsilon_q$ holds at line 11--14.
If $d_q(U_{\mathrm{ref}},U)>\varepsilon_q$, the gate is explicitly added to $\mathcal{G}_{\mathrm{keep}}$.
Thus the pruning decision is exactly consistent with Definition~\ref{def:redundancy}.
\end{proof}

\subsection{Theorem 2: Bounded functional equivalence on the task ensemble}
To obtain a circuit-level guarantee from gate-level redundancy,
we require a mild structural assumption: each removed gate is replaced by a retained representative
from the same subgroup \emph{at the same circuit location}. This is the standard ``structured replacement''
interpretation of one-shot gate pruning.

\begin{assumption}[Structured replacement model]\label{ass:replacement}
Whenever Algorithm~\ref{alg:qiprune} discards a gate $U$ inside subgroup $\mathcal{G}_r$,
the pruned circuit $\mathcal{C}'$ uses the representative $U_{\mathrm{ref}}\in\mathcal{G}_r$
at that location (or an identity if the subgroup semantics allows it).
\end{assumption}

\begin{theorem}[Task-conditioned bounded equivalence]\label{thm:bounded_equivalence}
Assume each gate in the circuit is unitary after compilation, and Assumption~\ref{ass:q_inner_product} holds.
Let $\mathcal{C}'$ be obtained from $\mathcal{C}$ by replacing each removed gate $U$ with its representative
$U_{\mathrm{ref}}$ satisfying $d_q(U_{\mathrm{ref}},U)\le \varepsilon_q$.
Then for every input $\ket{\psi}\in\mathcal{D}$,
\begin{equation}
\|\rho_\psi-\rho'_\psi\|_1 \;\le\; 2L\cdot \sqrt{1-\cos^2(\varepsilon_q)/M_q^2}
\;\le\; \frac{2L}{M_q}\sin(\varepsilon_q),
\label{eq:circuit_trace_bound}
\end{equation}
where $L$ is the number of replaced (pruned) gate locations along the forward execution path.
Consequently, for any observable $O$,
\begin{equation}
\left|\mathrm{Tr}(O\rho_\psi)-\mathrm{Tr}(O\rho'_\psi)\right|
\le \|O\|_{\mathrm{op}}\cdot \frac{2L}{M_q}\sin(\varepsilon_q).
\label{eq:circuit_obs_bound}
\end{equation}
\end{theorem}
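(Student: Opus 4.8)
The plan is to prove the circuit-level bound by a \emph{hybrid (telescoping) argument} that reduces the $L$-gate replacement to $L$ independent single-gate replacements, each already controlled by Proposition~\ref{prop:redundancy_bound}. Fix an input $\ket{\psi}\in\mathcal{D}$ and write the forward execution of $\mathcal{C}$ as an ordered product of compiled unitaries. Let $p_1<\dots<p_L$ be the circuit locations at which Assumption~\ref{ass:replacement} substitutes the original gate $U_{p_j}$ by its subgroup representative $V_{p_j}=U_{\mathrm{ref}}$. I would define a chain of hybrid circuits $\mathcal{H}_0=\mathcal{C},\mathcal{H}_1,\dots,\mathcal{H}_L=\mathcal{C}'$, where $\mathcal{H}_j$ applies the representatives at locations $p_1,\dots,p_j$ and the originals everywhere else, so that consecutive hybrids differ at exactly one location. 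Since the trace distance obeys the triangle inequality, $\|\rho_\psi-\rho'_\psi\|_1\le\sum_{j=1}^L\big\|\mathcal{H}_{j-1}(\ket{\psi}\!\bra{\psi})-\mathcal{H}_j(\ket{\psi}\!\bra{\psi})\big\|_1$, so it suffices to bound each increment by $\tfrac{2}{M_q}\sin(\varepsilon_q)$.

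For the $j$-th increment, $\mathcal{H}_{j-1}$ and $\mathcal{H}_j$ share an identical prefix $U_{\mathrm{pre}}^{(j)}$ (all gates before $p_j$) and an identical suffix $U_{\mathrm{suf}}^{(j)}$ (all gates after $p_j$), differing only in whether $U_{p_j}$ or $V_{p_j}$ is applied at $p_j$. Because the input is pure and every gate is unitary, the prefix produces a pure, unit-norm intermediate state $\ket{\chi_j}=U_{\mathrm{pre}}^{(j)}\ket{\psi}$, and the two hybrid outputs are the pure states $U_{\mathrm{suf}}^{(j)}U_{p_j}\ket{\chi_j}$ and $U_{\mathrm{suf}}^{(j)}V_{p_j}\ket{\chi_j}$. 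The key simplification is that the shared suffix cancels in the overlap, $\langle\chi_j|U_{p_j}^\dagger (U_{\mathrm{suf}}^{(j)})^\dagger U_{\mathrm{suf}}^{(j)}V_{p_j}|\chi_j\rangle=\langle\chi_j|U_{p_j}^\dagger V_{p_j}|\chi_j\rangle$, so by unitary invariance of the pure-state trace distance the $j$-th increment equals $2\sqrt{1-|\langle\chi_j|U_{p_j}^\dagger V_{p_j}|\chi_j\rangle|^2}$, which is exactly the quantity bounded in Proposition~\ref{prop:redundancy_bound} with input $\ket{\chi_j}$. Invoking that proposition at each location and summing yields $\|\rho_\psi-\rho'_\psi\|_1\le 2L\sqrt{1-\cos^2(\varepsilon_q)/M_q^2}\le\tfrac{2L}{M_q}\sin(\varepsilon_q)$; the observable inequality then follows immediately from $|\mathrm{Tr}(O(\rho_\psi-\rho'_\psi))|\le\|O\|_{\mathrm{op}}\|\rho_\psi-\rho'_\psi\|_1$.

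The main obstacle is \emph{matching the premise of Proposition~\ref{prop:redundancy_bound} to the quantity the pruning rule actually controls}. That proposition needs the per-state condition $\arccos(|\langle\chi_j|U_{p_j}^\dagger V_{p_j}|\chi_j\rangle_q|/\|\chi_j\|_q^2)\le\varepsilon_q$ evaluated at the \emph{propagated} state $\ket{\chi_j}$, whereas the redundancy criterion $d_q(U_{\mathrm{ref}},U)\le\varepsilon_q$ of Definition~\ref{def:redundancy} is (i) an \emph{average} over the task ensemble $\mathcal{D}$ rather than a per-sample bound, and (ii) evaluated on the \emph{input} states $\ket{\psi_k}\in\mathcal{D}$ rather than on their prefix images $\ket{\chi_j}$. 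I would close this gap in the direction the paper already signals with its ``position-aware, prefix-state'' design: define the redundancy test at location $p_j$ using the prefix ensemble $\{U_{\mathrm{pre}}^{(j)}\ket{\psi_k}\}_k$ instead of $\mathcal{D}$, and either (a) strengthen Definition~\ref{def:redundancy} to a per-sample bound $\max_k\arccos(\cdots)\le\varepsilon_q$ so the premise holds at each $\ket{\chi_j}$, or (b) replace the state-averaged test by the state-independent surrogate $\|U_{p_j}-V_{p_j}\|_{\mathrm{op}}\le\eta$, which for unitaries gives $\|U_{p_j}\ket{\chi}-V_{p_j}\ket{\chi}\|_2\le\eta$ uniformly in $\ket{\chi}$ and hence a per-step trace-distance bound independent of propagation. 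Under either strengthening the telescoping goes through verbatim, so the honest reading is that the clean $L$-fold bound requires a per-state, prefix-aware redundancy condition, with the task-averaged $d_q$ controlling it exactly at the first replaced layer and through the operator-level surrogate thereafter.
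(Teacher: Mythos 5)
Your telescoping construction is exactly the argument the paper itself gives: its proof replaces the pruned gates one at a time, invokes Proposition~\ref{prop:redundancy_bound} ``applied to the corresponding intermediate input state,'' and sums with the triangle inequality; your prefix/suffix cancellation and the pure-state trace-distance formula just make explicit what the paper compresses into one sentence. The substantive part of your submission is the obstacle you raise in the last paragraph, and you are right: it is a genuine hole, and it sits in the paper's own proof, not merely in your reconstruction. The theorem's hypothesis $d_q(U_{\mathrm{ref}},U)\le\varepsilon_q$ controls (i) only an \emph{ensemble average} of $\arccos$ overlaps, and (ii) only at the raw ensemble states $\ket{\psi_k}\in\mathcal{D}$, whereas each telescoping step needs a \emph{per-state} overlap bound at the \emph{propagated} state $\ket{\chi_j}=U^{(j)}_{\mathrm{pre}}\ket{\psi}$. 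Nothing in Assumption~\ref{ass:q_inner_product} or in unitarity transfers an overlap bound from $\mathcal{D}$ to its prefix images, so the paper's parenthetical is an unjustified leap.

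In fact, read literally, the theorem is false without a strengthening of the kind you propose, so this cannot be patched by rewording the proof alone. Counterexample: one qubit, $\mathcal{D}=\{\ket{0}\}$, $G_q=I$ (so $M_q=1$), circuit $\mathcal{C}$ consisting of a Hadamard $H$ followed by $Z$ at the prunable location, with representative $U_{\mathrm{ref}}=I$. Then $d_q(I,Z)=\arccos\left|\bra{0}Z\ket{0}\right|=0\le\varepsilon_q$ for every $\varepsilon_q$, so the replacement is licensed; yet the original output $ZH\ket{0}=\ket{-}$ and the pruned output $H\ket{0}=\ket{+}$ are orthogonal, giving $\|\rho_\psi-\rho'_\psi\|_1=2$, while the claimed bound $\frac{2L}{M_q}\sin(\varepsilon_q)$ can be made arbitrarily small. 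Both of your repairs close exactly this hole: (a) a per-state ($\max_k$ rather than average) test evaluated on the prefix ensemble $\{U^{(j)}_{\mathrm{pre}}\ket{\psi_k}\}_k$ is precisely the premise the telescoping step consumes; (b) the operator-norm surrogate $\|U_{p_j}-U_{\mathrm{ref}}\|_{\mathrm{op}}\le\eta$ is propagation-independent and, for unitaries acting on a pure state, yields a per-step trace distance at most $2\eta$. One small correction to your closing sentence: even at the \emph{first} replaced location the task-averaged $d_q$ does not suffice for a fixed input $\ket{\psi}\in\mathcal{D}$, both because of a possible nontrivial prefix and because an average $\arccos$ bound does not imply the per-state bound Proposition~\ref{prop:redundancy_bound} requires; the $\max_k$ (or surrogate) form is needed at every layer.
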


\begin{proof}
Consider replacing gates one by one along the circuit locations where pruning occurred.
Let $\rho^{(0)}_\psi:=\rho_\psi$ be the original output state and $\rho^{(L)}_\psi:=\rho'_\psi$ be the final pruned output.
Let $\rho^{(\ell)}_\psi$ be the output after the first $\ell$ replacements, so that successive outputs differ by
a single gate replacement $U\mapsto U_{\mathrm{ref}}$ acting on the same wires.
For each step, Proposition~\ref{prop:redundancy_bound} (applied to the corresponding intermediate input state)
implies a per-step trace distance bound
\[
\|\rho^{(\ell-1)}_\psi-\rho^{(\ell)}_\psi\|_1 \le 2\sqrt{1-\cos^2(\varepsilon_q)/M_q^2}
\le \frac{2}{M_q}\sin(\varepsilon_q).
\]
Using the triangle inequality and summing over $\ell=1,\dots,L$ yields~\eqref{eq:circuit_trace_bound}.
Finally,~\eqref{eq:circuit_obs_bound} follows from
$|\mathrm{Tr}(O(\rho-\sigma))|\le \|O\|_{\mathrm{op}}\|\rho-\sigma\|_1$.
\end{proof}

\begin{remark}[Connection to tolerance calibration]
Combining~\eqref{eq:circuit_obs_bound} with Assumption~\ref{ass:eps_target} yields an explicit way
to select $\varepsilon_q$ to satisfy a desired task deviation tolerance $\delta$:
if $\frac{2L}{M_q}\sin(\varepsilon_q)\le \delta$, then all task observables drift by at most $\delta\|O\|_{\mathrm{op}}$.
\end{remark}

\subsection{Theorem 3: Computational feasibility (time/space complexity)}
\begin{theorem}[Complexity of Algorithm~\ref{alg:qiprune}]\label{thm:complexity}
Let $N$ be the total number of gates and $R$ be the number of subgroups.
Let $n_r:=|\mathcal{G}_r|$ so that $\sum_{r=1}^R n_r=N$.
Assume computing $d_q(U_{\mathrm{ref}},U)$ over the ensemble $\mathcal{D}$ costs $O(M\cdot C)$,
where $C$ is the cost of applying (compiled) $U_{\mathrm{ref}}^\dagger U$ to a state.
Then the total runtime is
\[
O\!\left(\sum_{r=1}^R (n_r-1)\cdot M\cdot C\right)=O(N\cdot M\cdot C),
\]
if each subgroup uses a single reference gate and compares all other gates only to that reference.
If pairwise distances are computed (e.g.\ to select a medoid reference), the worst-case runtime becomes
\[
O\!\left(\sum_{r=1}^R n_r^2\cdot M\cdot C\right)\le O(N^2\cdot M\cdot C),
\]
and the space complexity is $O(\sum_r n_r^2)$ if all pairwise distances are stored, or $O(1)$ extra space
if distances are computed on-the-fly.
\end{theorem}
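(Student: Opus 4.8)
The plan is to treat this as a direct accounting of the work performed by the nested loops in Algorithm~\ref{alg:qiprune}, combined with the stated per-distance cost model. First I would fix the elementary cost unit. Evaluating $d_q(U_{\mathrm{ref}},U)$ via Eq.~\eqref{eq:dq_def} requires, for each of the $M$ ensemble states $\ket{\psi_k}$, applying the (compiled) operator $U_{\mathrm{ref}}^\dagger U$ to $\ket{\psi_k}$ and forming the $G_q$-weighted overlap; the subsequent $\arccos$ and the outer average $\frac{1}{M}\sum_k$ contribute only $O(M)$ scalar operations, which are absorbed into the per-state cost $C$. Hence a single distance evaluation costs $O(M\cdot C)$, matching the hypothesis.

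For the single-reference variant, I would count iterations of the inner loop. Within subgroup $\mathcal{G}_r$ the algorithm selects one reference and compares the remaining $n_r-1$ gates against it, performing exactly $n_r-1$ distance evaluations, each costing $O(M\cdot C)$. Summing over the $R$ subgroups gives a total of
\[
\sum_{r=1}^R (n_r-1)\cdot O(M\cdot C)=O\!\left(\sum_{r=1}^R (n_r-1)\cdot M\cdot C\right).
\]
Because $\sum_{r=1}^R(n_r-1)=N-R\le N$, this collapses to $O(N\cdot M\cdot C)$, establishing the first bound.

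For the medoid variant, selecting the reference by minimizing $d_q$ requires all intra-subgroup pairwise distances, i.e.\ $\binom{n_r}{2}=O(n_r^2)$ evaluations in $\mathcal{G}_r$, for a total of $O\!\left(\sum_r n_r^2\cdot M\cdot C\right)$. The coarser bound follows from the convexity-type inequality $\sum_{r=1}^R n_r^2\le\left(\sum_{r=1}^R n_r\right)^2=N^2$, valid since all $n_r\ge 0$. For the space claim, storing the full pairwise distance table of $\mathcal{G}_r$ uses $O(n_r^2)$ cells, giving $O(\sum_r n_r^2)$ overall; alternatively, streaming each distance and retaining only the running medoid candidate and its current score keeps only $O(1)$ auxiliary numbers.

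The argument is essentially a bookkeeping exercise, so the only places demanding care are the cost model and the two summation bounds. The main (mild) subtlety is justifying that the $q$-weighting and the final $\arccos$/average do not inflate the asymptotics beyond $O(M\cdot C)$: I would note that $G_q$ acts on the same fixed finite-dimensional $\mathcal{H}$, so multiplication by $G_q$ (or $G_q^{1/2}$) is a single linear-algebra primitive folded into $C$, and that the $M$ scalar $\arccos$ evaluations are dominated by the $M$ state applications. The bound $\sum_r n_r^2\le N^2$ is immediate from nonnegativity of the cross terms in the expansion of $\left(\sum_r n_r\right)^2$, so no genuine obstacle arises.
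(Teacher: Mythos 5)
Your proposal is correct and follows essentially the same route as the paper's own proof: a direct count of $d_q$ evaluations ($\sum_r (n_r-1)=N-R$ for the reference-only variant, $O(n_r^2)$ per subgroup for the pairwise/medoid variant), each costing $O(M\cdot C)$, plus the observation that storing pairwise distances takes $O(\sum_r n_r^2)$ space versus $O(1)$ extra space on-the-fly. Your additional justifications (absorbing the $\arccos$/averaging into the cost model, the bound $\sum_r n_r^2 \le N^2$, and the constant-space streaming medoid selection) only make explicit what the paper leaves implicit.
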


\begin{proof}
The algorithm performs one $d_q$ computation per comparison.
In the reference-only version there are exactly $\sum_r (n_r-1)=N-R$ comparisons.
In the pairwise variant, subgroup $r$ requires $O(n_r^2)$ comparisons.
Storing all distances uses $O(n_r^2)$ memory per subgroup; otherwise memory is constant beyond inputs.
\end{proof}

\begin{remark}[Correcting ``exponential-to-linear'' phrasing]
The guarantees here are polynomial in the number of gates $N$ and do not require exponential enumeration
over the full $2^L$-dimensional Hilbert space. Claims of ``linear in qubits'' should be interpreted
as ``avoiding exponential dependence on qubits'' rather than $O(L)$ runtime.
\end{remark}

\subsection{Practical calibration notes (non-theorem)}
The mapping $\lambda\mapsto q(\lambda)$ and the choice of $\mathcal{D}$ determine $G_q$ and $M_q$.
In experiments, $\lambda$ may be set from a hardware error proxy (e.g.\ RB) as in your original draft,
but this is treated as a \emph{modeling assumption} rather than a theorem.

\section{Experiments and Results}
\label{sec:experiments_results}

\subsection{Tasks and datasets}
Our evaluation tasks are divided into two categories:
(i) \textbf{Toy Quantum Classification} and (ii) \textbf{Variational Quantum Eigensolver (VQE)}.

\begin{table*}
    \centering
    \caption{Experimental results on three datasets}
    \label{tab:cls_results_paper}
    \begin{tabular}{l|l|lccccccc}
        \hline
        Dataset & $\delta$ & $\sigma$ & Acc$_\mathrm{base}$ (\%) & Acc$_\mathrm{pruned}$(\%) & Acc$_\mathrm{drop}$(\%) & Replace(\%) & RHS(raw{$\rightarrow$}$clipped$) &$dq_{max(repl.)}$ \\
        \hline
        \multirow{8}{*}{\rotatebox{90}{mnist49}}
        & \multirow{4}{*}{\centering\rotatebox{90}{$\delta=0.01$}}
        & 0.001 & 72.77 & 72.90 & -0.13 & 60.00 & 2.88{$\rightarrow$}1  &0.0040 \\
        && 0.003 & 72.62 & 72.80 & -0.18 & 50.02 & 2.41{$\rightarrow$}1 &0.0049 \\
        && 0.006 & 72.37 & 73.10 & -0.73 & 19.79 & 0.95  &0.0050 \\
        && 0.01 & 72.30 & 72.60 & -0.30 & 7.50 & 0.36  &0.0048 \\
        \cline{2-9}
        & \multirow{4}{*}{\centering\rotatebox{90}{$\delta=0.02$}} 
        & 0.001 & 72.77 & 72.90 & -0.13 & 60.00 & 5.76{$\rightarrow$}1  &0.0040 \\
        && 0.003 & 72.62 & 72.80 & -0.18 & 60.00 & 5.76{$\rightarrow$}1 &0.0079 \\
        && 0.006 & 72.37 & 73.10 & -0.73 & 51.98 & 4.99{$\rightarrow$}1 &0.0098 \\
        && 0.01 & 72.30 & 72.60 & -0.30 & 26.67 & 2.56{$\rightarrow$}1  &0.0099 \\
        \hline
        \multirow{8}{*}{\rotatebox{90}{fashion\_sb }}
        & \multirow{4}{*}{\centering\rotatebox{90}{$\delta=0.01$}}
        & 0.001 & 82.32 & 82.48 & -0.16 & 60.00 & 2.88{$\rightarrow$}1  &0.0042 \\
        && 0.003 & 82.30 & 82.50 & -0.20 & 51.56 & 2.48{$\rightarrow$}1 &0.0049 \\
        && 0.006 & 81.85 & 82.55 & -0.70 & 21.35 & 1.03{$\rightarrow$}1 &0.0049\\
        && 0.01 & 81.77 & 81.95 & -0.18 & 7.92 & 0.38  &0.0048\\
        \cline{2-9}
        & \multirow{4}{*}{\centering\rotatebox{90}{$\delta=0.02$}} 
         & 0.001 & 82.32 & 82.48 & -0.16 & 60.00 & 5.76{$\rightarrow$}1 &0.0041 \\
        && 0.003 & 82.30 & 82.50 & -0.20 & 60.00 & 5.76{$\rightarrow$}1 &0.0080 \\
        && 0.006 & 81.85 & 82.55 & -0.70 & 52.60 & 5.05{$\rightarrow$}1 &0.0099 \\
        && 0.01 & 81.77 & 81.95 & -0.18 & 27.60 & 2.65{$\rightarrow$}1  &0.0099\\
        \hline
       \multirow{8}{*}{\rotatebox{90}{bas}}
        & \multirow{4}{*}{\centering\rotatebox{90}{$\delta=0.01$}}
        & 0.001 & 64.05 & 64.05 & 0.00 & 59.79 & 1.435{$\rightarrow$}1  &0.0034 \\
        && 0.003 & 64.05 & 64.05 & 0.00 & 53.13 & 1.28{$\rightarrow$}1 &0.0048 \\
        && 0.006 & 64.05 & 64.05 & 0.00 & 26.88 & 0.65 &0.0049 \\
        && 0.01 & 64.05 & 62.50 & 1.55 & 11.04 & 0.27  &0.0047 \\
        \cline{2-9}
        & \multirow{4}{*}{\centering\rotatebox{90}{$\delta=0.02$}} 
        & 0.001 & 64.05 & 64.05 & 0.00 & 60.00 & 2.88{$\rightarrow$}1 &0.0042 \\
        && 0.003 & 64.05 & 64.05 & 0.00 & 59.79 & 2.87{$\rightarrow$}1  &0.0070 \\
        && 0.006 & 64.05 & 64.05 & 0.00 & 53.33 & 2.56  &0.0097 \\
        && 0.01 & 64.05 & 62.50 & 1.55 & 32.08 & 1.54  &0.0098 \\
        \hline
    \end{tabular}
\end{table*}
\textbf{Toy Classification.}
We consider three binary toy visual tasks:
(i) "4 and 9 classification" from the MNIST dataset (\textbf{mnist49}),
(ii) "Sandals and boots classification" from the Fashion-MNIST dataset (\textbf{fashion\_sb}),
and (iii) striped pattern classification (\textbf{bas}).
For the MNIST and Fashion-MNIST datasets, we use $n=8$ qubits; for the striped pattern classification, we use $n=4$ qubits.
All data is encoded into quantum states using the same encoder described in Section~\ref{sec:encoding}.

\textbf{VQE.}
We use a transverse field Ising model (TFIM) VQE benchmark with $n=4$ qubits (\textbf{tfim}).
The set of states for this task is sampled from intermediate states generated during the VQE optimization trajectory.

\begin{table*}[t]
    \centering
    \caption{TFIM VQE experiment results}
    \label{tab:vqe_results_paper}
    \begin{tabular}{l|l|lccccccc}
        \hline
        Dataset & $\delta$ & $\sigma$ & E$_\mathrm{base}$  & E$_\mathrm{pruned}$ & E$_\mathrm{drop}$ & Replace(\%) & RHS(raw{$\rightarrow$}$clipped$) &$dq_{max(repl.)}$ \\
        \hline
        \multirow{8}{*}{\rotatebox{90}{TFIM VQE}}
        & \multirow{4}{*}{\centering\rotatebox{90}{$\delta=0.01$}}
        & 0.001 & 0.3976 & 0.3970 & 0.0006 & 60.00 & 1.44{$\rightarrow$}1  &0.0029 \\
        && 0.003 & 0.3983 & 0.3964 & 0.0019 & 52.92 & 1.27{$\rightarrow$}1 &0.0049 \\
        && 0.006 & 0.3992 & 0.3955 & 0.0037 & 24.38 & 0.59  &0.0049 \\
        && 0.01 & 0.4003 & 0.3940 & 0.0063 & 10.63 & 0.26  &0.0047 \\
        \cline{2-9}
        & \multirow{4}{*}{\centering\rotatebox{90}{$\delta=0.02$}} 
        & 0.001 & 0.3976 & 0.3970 & 0.0006 & 60.00 & 2.88{$\rightarrow$}1  &0.0028 \\
        && 0.003 & 0.3983 & 0.3964 & 0.0019 & 60.00& 2.88{$\rightarrow$}1 &0.0068 \\
        && 0.006 & 0.3992 & 0.3955 & 0.0037 & 53.33 & 2.56{$\rightarrow$}1  &0.0098 \\
        && 0.01 & 0.4003 & 0.3940 & 0.0063 & 30.42 & 1.46{$\rightarrow$}1  &0.0099 \\
        \hline
    \end{tabular}
\end{table*}
\begin{figure*}
\centering
\includegraphics[width=0.88\linewidth]{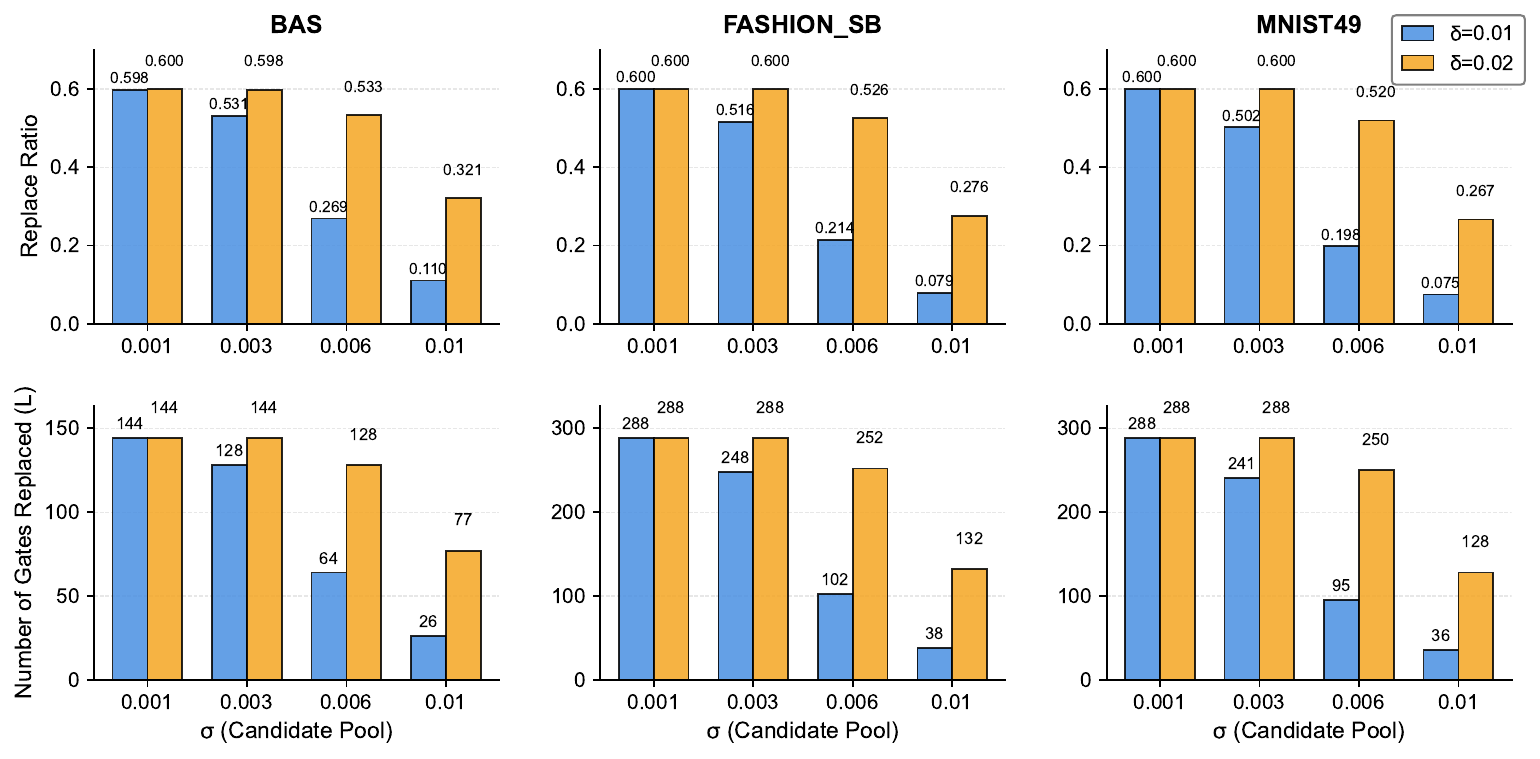}
\caption{Experimental results of q-Group on all classification datasets.}
\label{fig:panel_classification}
\end{figure*}

\begin{figure*}
\centering
\includegraphics[width=0.88\linewidth]{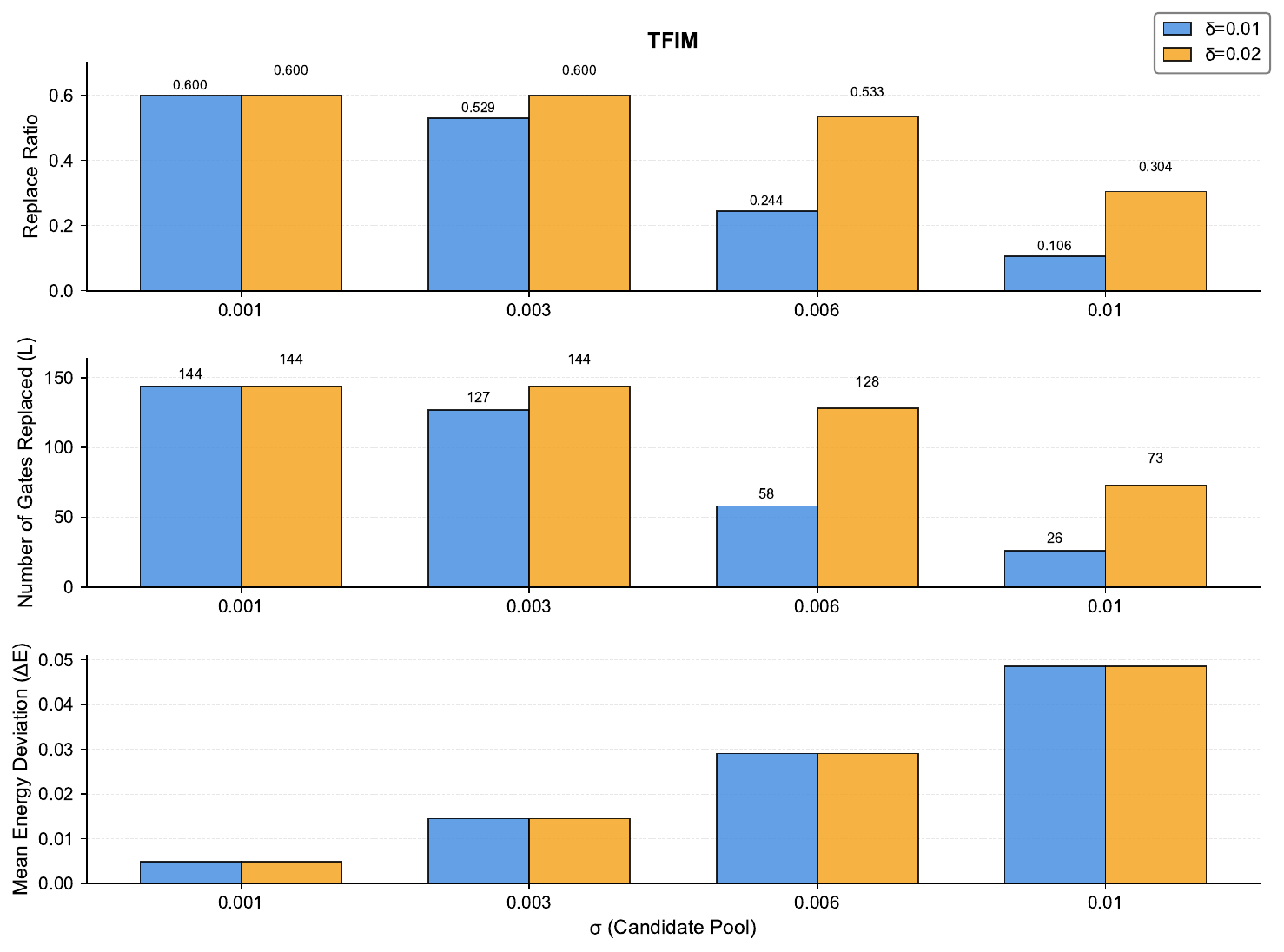}
\caption{Experimental results of the q-Group on TFIM VQE.}
\label{fig:panel_vqe}
\end{figure*}

\subsection{Experimental hyperparameter settings}
\label{subsec:exp_hparams}

We use \textbf{8 qubits} for the \textbf{mnist49} and \textbf{fashion\_sb} datasets, and \textbf{4 qubits} for the remaining tasks (\textbf{bas} and \textbf{tfim}). Unless otherwise specified, the circuit depth is fixed at \textbf{12}.
Therefore, for the classification tasks on the \textbf{mnist49} and \textbf{fashion\_sb} datasets, our proposed hardware-efficient quantum circuit contains a total of \textbf{480 single-qubit rotation gates}, while the \textbf{bas} and \textbf{tfim} settings contain \textbf{240 rotation gates}.

At each realizable \textbf{Rot} position, we construct a candidate pool by sampling multiple rotation candidates around a position-specific center. Each candidate is obtained by perturbing the center angles as
\[
(\alpha,\beta,\gamma) = (\alpha_0,\beta_0,\gamma_0) + \xi,\qquad 
\xi\sim\mathcal{N}(0,\sigma^2 I_3),
\]
where $\sigma$ (\texttt{rot\_pool\_sigma}) controls the dispersion of the candidate pool: smaller $\sigma$ yields more concentrated (and typically more redundant) candidates.

\textbf{Task-conditioned ensemble.}
For each experiment, we define a task-conditioned ensemble
$\mathcal{D}=\{\ket{\psi_k}\}_{k=1}^{M}$ with $M=50$.
For classification, $\ket{\psi_k}$ are encoding states sampled uniformly from the validation set.
For VQE, $\ket{\psi_k}$ are intermediate states sampled along the VQE optimization trajectory.

\textbf{Paper-aligned parameters.}
We set $\gamma=0.05$ and $\alpha=0.6$ and compute the contraction parameter
\[
\lambda = 1-\gamma\alpha.
\]
We fix $q\beta=1$ and sweep $\delta\in\{0.01,0.02\}$ and $\sigma\in\{0.001,0.003,0.006,0.01\}$.
Following the paper rule, the pruning threshold is set as
\[
\epsilon_q = \delta/2.
\]

\subsection{Metrics}
We report:
\begin{itemize}
  \item \textbf{Replacement ratio} (reported in logs as \textbf{Replace(\%)} in our tables), reflecting the proportion of candidates/locations
        affected by the paper-aligned replacement/pruning rule.
  \item \textbf{Task metrics}:
    (i) classification Acc$_\mathrm{base}$, Acc$_\mathrm{pruned}$, and Acc$_\mathrm{drop}=$Acc$_\mathrm{base}$--Acc$_\mathrm{pruned}$
    (negative means pruning improves accuracy),
    (ii) VQE energies $E_\mathrm{base}$, $E_\mathrm{pruned}$, and the deviation $\Delta E$.
\end{itemize}

\subsection{On the theoretical upper bound possibly exceeding $1$}
\label{sec:bound_clipping_merged}
Our theory yields an upper bound of the form $\mathrm{Drift} \le \mathrm{RHS}_{\text{raw}}$ (Eq.~\eqref{eq:circuit_obs_bound}),
where $\mathrm{RHS}_{\text{raw}}$ may exceed $1$ because it is a (potentially loose) analytic bound.
Since the trace distance is always in $[0,1]$, we also report the \emph{effective} clipped bound:
\[
\mathrm{RHS}_{\text{clip}}=\min\{1,\mathrm{RHS}_{\text{raw}}\}.
\]
This clipping does not change correctness; it only expresses the bound in the valid numeric range.

\subsection{Main results}
\label{sec:main_results}
Table~\ref{tab:cls_results_paper} and Figure~\ref{fig:panel_classification} show the experimental results of \textbf{q-iPrune} on three classification datasets. Overall, the results demonstrate that \textbf{q-iPrune} can effectively simplify the candidate pool while maintaining task performance.
On the MNIST and Fashion datasets, the pruned circuits maintain (and often slightly improve) validation accuracy across various $\sigma$ values.
This behavior is consistent with the expected mechanism: when $\sigma$ is small, the candidates are highly redundant, so many candidates satisfy $d_q\le\epsilon_q$, allowing for a high replacement rate without harming the learned decision boundary.

For Bars \& Stripes, the task is more sensitive at large $\sigma$ (e.g., $\sigma=0.01$), where the pruned accuracy can drop.
This aligns with the interpretation that larger $\sigma$ yields more diverse candidates; consequently, replacements/pruning become less ``safe'' and can remove
task-relevant variations, especially under more permissive thresholds (larger $\delta$, hence larger $\epsilon_q$).

Table~\ref{tab:vqe_results_paper} and Figure~\ref{fig:panel_vqe} show the experimental results on the TFIM VQE task, where the bias $\Delta E$ increases with increasing $\sigma$, indicating that pruning/replacement among more diverse candidates may introduce stronger function drift during the optimization process. Importantly, the reported $dq_{\max(\mathrm{repl.})}$ values are consistently below the corresponding $\epsilon_q$ values derived from $\delta$, indicating that the per-replacement constraint of the protocol is satisfied in this setting.

Finally, the theoretical right-hand bound is often greater than 1 (and thus truncated), reflecting that it is a conservative analytical bound rather than a precise predictor.
Therefore, we primarily consider the right-hand bound as a "sanity check" guarantee: the bound remains valid, while the empirical drift provides a signal of tightness in practice.

\section{Conclusion}
Across toy classification and TFIM VQE, the protocol yields substantial replacement ratios while
maintaining small empirical drift and stable task performance.
All replaced candidates satisfy the paper-aligned task-conditioned constraint $d_q\le\epsilon_q$ with zero violations.
While the analytic bound $\mathrm{RHS}_{\text{raw}}$ can exceed $1$ due to looseness, the clipped bound
$\mathrm{RHS}_{\text{clip}}=\min(1,\mathrm{RHS}_{\text{raw}})$ remains a valid effective upper bound for trace distance.


\end{document}